\newtheorem{theorem}{Theorem}
\newtheorem{lemma}[theorem]{Lemma}
\newtheorem{observation}[theorem]{Observation}
\theoremstyle{definition}
\newtheorem{conjecture}[theorem]{Conjecture}
\begin{document}

\title{Competitive Online Transportation Simplified}
\author{Stephen Arndt\thanks{Tepper School of Business, Carnegie Mellon University.} \and Benjamin Moseley\thanks{Tepper School of Business, Carnegie Mellon University. Supported in part by Google Research Award, NSF grants CCF-2121744 and CCF-1845146 and ONR Grant N000142212702.} \and Kirk Pruhs\thanks{Computer Science Department, University of Pittsburgh. Supported by National Science Foundation grant CCF-2209654.} \and Marc Uetz\thanks{Faculty of Electrical Engineering, Mathematics, and Computer Science, University of Twente. Marc Uetz acknowledges funding by the faculty EEMCS of the University of Twente, and the Tepper School of Business of CMU for hosting as a short-term scholar.}}
\date{}

\maketitle

\begin{abstract}
The setting for the online transportation problem is a metric space $M$, populated by $m$ parking garages of varying capacities. Over time cars arrive in $M$, and must be irrevocably assigned to a parking garage upon arrival in a way that respects the
garage capacities. The objective is to minimize the aggregate distance traveled by  the cars. 
In 1998, Kalyanasundaram and Pruhs conjectured that there is a $(2m-1)$-competitive deterministic algorithm for the online transportation problem, matching the optimal
competitive ratio for the simpler online metric matching problem. 
Recently, Harada and Itoh presented the first $O(m)$-competitive deterministic algorithm for the online transportation problem. 
Our contribution is an alternative algorithm design and analysis that 
we believe is simpler. 
\end{abstract}

\newpage

\section{Introduction}

Let us start by defining the problem.

\paragraph{Online Transportation Problem Definition:}
The setting for the online transportation problem is a metric space $M=(X, d)$, consisting of a collection $X$ of points/locations,
and  a distance function $d: X^2 \rightarrow 	\mathbb{R}^+$ that specifies the distance
between pairs of points in $X$. The metric space $M$ is populated by $m$ parking garages $1, 2, \dots, m$ at the collection of locations $G=\{g_1, g_2, \ldots, g_m\}$, where each $g_i \in X$.
Further each garage $i$ has a 
positive  integer capacity $c_i$, which is conceptually 
the number of parking spaces in  garage $i$.
At each integer unit of time $t$, $1 \le t \le k $, a car $t$ arrives at some location $x_t \in X$, where $k=\sum_{i=1}^m c_i$.
After the online algorithm sees the location $x_t$ it must irrevocably assign car $t$ 
to a garage $i$ of its choice, subject to the constraint that strictly less than $c_i$ cars have been previously assigned to garage $i$,
incurring a movement cost of $d(x_t, g_i)$. The objective is to  minimize the total movement (cost) of the cars.

\subsection{The Story to Date}

A widely-studied special case of the online transportation problem is the
online metric matching problem. In the online metric matching problem,
all of the garages have unit capacity. 
At the dawn of the halcyon era of competitive analysis of online algorithms, circa 1990, 
it was shown that:
\begin{itemize}
    \item The natural greedy algorithm, which parks each car in the
nearest available garage, is $\left(2^m - 1\right)$-competitive \cite{onlineweightedmatching}. 
\item  The Retrospective/Permutation algorithm is $(2m-1)$-competitive \cite{onlineweightedmatching,khullermitchell}. 
The Retrospective algorithm maintains the invariant that residual capacity of each garage is 
the same as residual capacity for that garage in the optimal matching of the cars that have arrived to date. 
\item  The competitive ratio of every deterministic algorithm on a star metric space is
at least $2m-1$ \cite{onlineweightedmatching, khullermitchell}. So for a general metric space, the Retrospective algorithm is optimally competitive among deterministic algorithms. 
\end{itemize}

In the mid 1990's, 
one of the three natural followup research directions/questions, 
proposed in \cite{Kalyanasundaram1998}, was to determine the optimal deterministic competitive ratio
for the online transportation problem.  In fact it was conjectured that:

\begin{conjecture}\cite{Kalyanasundaram1998}
\label{conj:2m-1}
There is a $(2m-1)$-competitive deterministic algorithm for the online transportation problem.     
\end{conjecture}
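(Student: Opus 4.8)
The natural plan is to lift the Retrospective/Permutation algorithm --- which is $(2m-1)$-competitive for online metric matching --- to the capacitated setting, and argue that the capacities do not degrade the ratio beyond a constant factor (the conjectured $2m-1$ being the target, the weaker $O(m)$ the fallback). Concretely, for each time $t$ let $O_t$ be an offline optimal assignment of the first $t$ cars to the garages, and let $\rho^{(t)}=(\rho^{(t)}_1,\dots,\rho^{(t)}_m)$ be the vector of residual capacities of the garages under $O_t$. The algorithm maintains the invariant that, after it processes car $t$, the residual-capacity vector of its own assignment is exactly $\rho^{(t)}$. When car $t$ arrives, $O_{t-1}$ and $O_t$ differ along a (shortest, hence simple) augmenting path $P_t$ that starts at $x_t$ and ends at some garage $i$ having spare capacity under $O_{t-1}$; the algorithm assigns car $t$ to that garage $i$. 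Two routine facts need checking: (i) by the usual symmetric-difference argument this keeps the residual vectors equal, so garage $i$ genuinely has room for the algorithm's car; and (ii) $O_t$ is monotone non-decreasing in $t$ (deleting the last car from $O_{t}$ leaves a feasible assignment of the first $t-1$ cars), so $|O_t|\le |O_k|=\mathrm{OPT}$ for every $t$, where $\mathrm{OPT}$ is the final optimal cost.

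The cost analysis then starts exactly as in the unit-capacity case: by the triangle inequality the algorithm pays at most $\ell(P_t)$ for car $t$, and $\ell(P_t)\le |O_{t-1}|+|O_t|\le 2\,\mathrm{OPT}$, which already gives the (too weak) bound $2k\,\mathrm{OPT}$. Bringing this down to $O(m)\cdot\mathrm{OPT}$ requires an amortized argument that uses that there are only $m$ garages, not $k$ cars. I would set it up with a potential $\Phi_t$ measuring the discrepancy between the algorithm's current configuration and $O_t$ --- for instance the minimum cost of transporting the algorithm's residual profile to $\rho^{(t)}$, or the cost of the cheapest assignment consistent with $\rho^{(t)}$ --- and try to prove a per-step inequality of the form $\mathrm{ALG}_t+\Phi_t-\Phi_{t-1}\le c\,(|O_t|-|O_{t-1}|)+(\text{a term that telescopes over the $m$ garages})$, with $\Phi_0=\Phi_k=0$ or at least $\Phi_k\ge 0$. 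The structural input I would lean on is that, since $P_t$ is simple and there are $m$ garages, it has at most $2m-1$ edges, so its "expensive" contribution can be charged to garages rather than to steps; combined with monotonicity of $O_t$, most steps should have cheap augmenting paths.

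The main obstacle --- and, I expect, the real content of the paper --- is precisely this amortization: bounding $\sum_t \ell(P_t)$ by $O(m)\cdot\mathrm{OPT}$ (ideally $(2m-1)\,\mathrm{OPT}$). Two things make it delicate. First, a single new car can force the offline optimum to reshuffle a number of assignments that grows with the capacities, so the naive per-step bound is genuinely lossy and must be recovered only on average. Second, a clean charging scheme must decide which garage pays for each rerouting, and a garage can be the endpoint of many augmenting paths over time --- up to once per unit of its capacity that is consumed --- so one needs to show that the aggregate of the paths ending at garage $i$ still costs only $O(1)\cdot\mathrm{OPT}$, e.g. by viewing those $c_i$ paths collectively as a flow whose cost is comparable to $\mathrm{OPT}$. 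If that per-garage bound goes through with the right constant, the conjectured ratio follows; with any constant it already yields the $O(m)$ bound the paper targets. I would therefore invest essentially all the effort in the potential and the per-garage charging, treating the algorithm and the invariant as the easy warm-up.
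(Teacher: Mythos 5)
There is a fundamental problem with your starting point: the statement you were asked about is a \emph{conjecture}, not a theorem, and the paper does not prove it. The paper's actual result is an $(8m-7)$-competitive algorithm, and Conjecture~\ref{conj:2m-1} remains open. So there is no ``paper's proof'' to compare against; what the paper gives is a weaker $O(m)$ bound obtained by an entirely different route.

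More importantly, the algorithm you propose is essentially the Retrospective/Permutation algorithm: assign car $t$ along the augmenting path between consecutive offline optima $O_{t-1}$ and $O_t$, so that the algorithm's residual-capacity vector tracks $\rho^{(t)}$ at every step. The paper's introduction explicitly states (citing \cite{Kalyanasundaram1998}) that ``the competitive ratio of the Retrospective algorithm for online transportation is not bounded by any function of $m$.'' In other words, the amortization you hope to carry out --- bounding $\sum_t \ell(P_t)$ by $O(m)\cdot\mathrm{OPT}$ via a potential and a per-garage charging scheme --- provably cannot exist for this algorithm. The part you defer as ``treating the algorithm and the invariant as the easy warm-up'' is precisely where the approach is already dead. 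Intuitively, with large capacities a single garage can absorb many cars, so the same long augmenting path can be forced $\Theta(c_i)$ times, and there is no bound on $\sum_i c_i$ in terms of $m$; your per-garage budget of $O(1)\cdot\mathrm{OPT}$ does not hold. Your observation that $P_t$ is simple and hence has $\le 2m-1$ edges is true but does not help, because the number of paths is $k=\sum_i c_i$, not $m$, and the monotonicity $|O_t|\le\mathrm{OPT}$ only gives the lossy $2k\cdot\mathrm{OPT}$ bound you already noted.

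What the paper actually does to get $O(m)$ is unrelated to augmenting paths. It (i) relaxes to the \emph{itinerant car} problem, where cars may be relocated between garages after placement; (ii) replaces $M$ by a power-of-two tree metric $T$ built from the minimum spanning tree on the garage locations; (iii) has each arriving car perform a hierarchy of depth-first walks $W_0,W_1,\ldots$ of growing radius, with a priority rule that lets cars on lower-level walks eject cars on higher-level walks; (iv) shows the resulting ``walk level'' profile matches the optimal bottleneck matching on $T$ (Lemma~\ref{lemma:inc_count}), and then charges walk costs against bottleneck cost; and finally (v) converts the itinerant-car algorithm to an online-transportation algorithm by the triangle inequality. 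None of this tracks the offline optimum's residual capacities, and indeed the paper points out that the resulting algorithm $\mathcal C$ does not even have the MPFS property. If you want to pursue Conjecture~\ref{conj:2m-1} itself, you would need either a genuinely new algorithm or a way to tighten the constants in the tree-embedding and bottleneck-matching steps; reviving the Retrospective algorithm is not a viable path.
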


So essentially this conjecture hypothesizes that arbitrary-capacity garages are no harder
for an online algorithm than unit-capacity garages.
It was observed early on that the analysis of the natural greedy algorithm for the online metric
matching algorithm is also valid in the online transportation problem, so the natural greedy algorithm is $\left(2^m-1\right)$-competitive for online transportation~\cite{Kalyanasundaram1998}. In contrast, the competitive ratio of the Retrospective algorithm 
for online transportation is not bounded by any function of $m$~\cite{Kalyanasundaram1998}.

There are two hurdles that researchers faced when trying  to find a (nearly) optimally-competitive
algorithm for online transportation. 
The first hurdle is  an algorithm design issue. 
In particular, it is not clear how the garage capacities should affect the design of the algorithm. 
One natural hypothesis is that the ``right'' algorithm should try to load balance
between garages,  favoring sparsely-filled garages over nearly-full garages.
In 1995, this intuition led to a result showing that a modification of the natural greedy algorithm, where the distance to garages that are more than half full is artificially
doubled, is $O(1)$-competitive for online transportation with resource augmentation, that
is when comparing to the optimal objective value with garages of half the capacity~\cite{onlinetransportwa}.
But this approach essentially hit a dead end, as such algorithms are not readily adaptable
to be close to  $O(m)$-competitive for online transportation without resource augmentation. 
The second, seemingly higher,  hurdle is an algorithmic analysis issue.
In the $(2m-1)$-competitiveness analysis of the  Retrospective algorithm, the distance that the Retrospective algorithm
moves each car is upper-bounded by twice the cost of the (entire) optimal matching. 
In the online transportation problem, due to there being potentially significantly many more cars than garages,
the distance that the online algorithm moves a car has to be upper-bounded by only a portion, say 
the cost of an edge for example, of the optimal matching. Identifying that portion/edge to charge
is challenging for the known algorithms that are candidates for being $O(m)$-competitive
for online transportation. 
 
As an illustrative example of this analysis difficulty, consider the 
Robust Matching (RM) algorithm~\cite{raghvendra2016}. 
Roughly speaking the RM algorithm tries to simultaneously approximate both the 
natural greedy algorithm and the Retrospective
algorithm, by both always parking each car in a garage that is within a constant factor
of being the closest, and by 
maintaining the invariant that the residual capacity of each garage is 
the same as the residual capacity for that garage in some constant-approximate optimal matching of the cars that have arrived to date~\cite{raghvendra2016}. In particular, the RM algorithm does not, in any overt way, take the fullness of the
various garages into account. 
The name Robust Matching was given to this algorithm when it
was introduced into the literature in 2016~\cite{raghvendra2016}, although the algorithm, and
the closely-related generalized work function algorithm, were known
as reasonable candidate algorithms in the mid 1990's. 
While it seems likely (at least to us) that  the RM algorithm is  $O(m)$-competitive for the online transportation problem,
the best competitiveness anyone can show is 
$O(m \log^2 k)$~\cite{HaradaI25,NayyarR17}. The challenge of establishing $O(m)$-competitiveness for RM
is determining how to charge RM's movement of individual cars to portions/edges in the optimal matching. 

Finally, after 30 years,  in 2025, a deterministic 
$O(m)$-competitiveness result for online transportation
was obtained~\cite{HaradaI25}. 
At a high level  this result surmounted the analysis issue by designing an
algorithm, which we will call the Star Decomposition algorithm,
with the Most Preferred Free Servers (MPFS) property,
which means that the garage that the algorithm picks to park a car only
depends on the location of the garages and the  car's arrival
location (not on how full the garages are, not on where previous cars were parked, etc.). 
It was observed in \cite{HaradaIM24} that,
like the natural greedy algorithm, upper bounds on the competitiveness
of algorithms with the MPFS property for the online metric matching problem
carry over to the online transportation problem. 
That is, if an MPFS algorithm $A$ is $f(m)$-competitive for online metric matching
then $A$ is $f(m)$-competitive for online transportation;
this fact is a natural consequence of
Vizing's Theorem applied to bipartite graphs. So Vizing's Theorem seemingly  automatically handles how to identify 
the edges in the optimal matching to charge the algorithm's movement of a car, if the algorithm has the MPFS property.

The Star Decomposition algorithm a priori computes:
\begin{itemize}
    \item A minimum spanning tree $S$ of the metric space $M_G$, which is the original metric
space $M$ restricted to the garage locations.
\item The tree $T$ obtained from $S$ by rounding distances up to the next integer power of two (this explains
the first factor-of-two loss in the computed competitive ratio, relative to a desired $2m-1$ bound).
\item A recursive star decomposition of $T$ in a hierarchical 
collection of star metrics (somewhat reminiscent of the decomposition of an  arbitrary
metric into a hierarchically separated tree). 
\item
From this star decomposition, for garage $i$,  a permutation $\pi_i$ of the garages. 

\end{itemize}
Then when a car $x_t$ arrives, the Star Decomposition algorithm pretends
that the  car arrived at the location
of the garage $i$ closest to the actual arrival location of the car (this explains
the second factor-of-two loss in the computed competitive ratio, relative to a desired $2m-1$ bound).
The Star Decomposition algorithm then parks the car in the first   unfull garage 
in the permutation $\pi_i$.

We now turn to the analysis of the Star Decomposition algorithm in \cite{HaradaI25}.
It is  
straightforward to observe that the
algorithm has the MPFS property. 
 The main issue/hurdle in the analysis is showing that the optimal matching in 
 the original metric space $M$ is relatively expensive, 
 or equivalently that the adversary can't benefit from using 
edges in $M$ that shortcut $T$.
 Intuitively this hurdle exists because the Star Decomposition algorithm
 does not overtly keep track of any sort of approximation of the current optimal matching,
 as does say the RM algorithm. 
But eventually, after significant work, \cite{HaradaI25} is able to show,
using the hybrid technique borrowed from \cite{harmonic-alg}, 
that the Star Decomposition algorithm is
$(8m-5)$-competitive on the original metric space $M$.
So at a high level the analysis of the Star Decomposition algorithm in \cite{HaradaI25}
avoided the issue of identifying an edge in the optimal matching to charge a car's movement to 
(because the algorithm has the  MPFS property) at the expense of complicating the 
argument that the optimal movement cost is large.

\subsection{Our Contribution}

Our goal here is to simplify the algorithm design
and analysis in \cite{HaradaI25}, so that it is easier to build on the significant insights there
toward the larger goal of settling Conjecture \ref{conj:2m-1}.

In Section \ref{sec:icp}, our first step  towards accomplishing this simplification is to
borrow a technique from \cite{AntoniadisBNPS19}, and
consider a relaxation of the
online transportation problem, which we call the itinerant car problem,
in which we remove the restriction that cars have to irrevocably assign
to a garage at arrival.

\paragraph{Itinerant Car Problem Definition:}
The setting and input is the same as in the online transportation problem. 
However, after the online algorithm sees the location $x_t$ that car $t$ arrives,
the algorithm can move each of the $t$ cars
that have arrived up until time $t$, from their current garage, to an arbitrary other  garage, 
 subject to the constraint that afterwards each garage $i$ can contain at most $c_i$ cars. 
The objective is still to  minimize the total movement (cost) of the cars.

Note that any feasible solution to an online transportation problem instance is a feasible solution
to the corresponding itinerant car problem instance, and an optimal solution to an online transportation problem 
instance is still an optimal solution to the corresponding itinerant car problem instance.

We then define a relatively simple online algorithm $\mathcal A$ for the itinerant car problem.
Our algorithm $\mathcal A$ is defined on the same power-of-two tree metric $T$,
derived from the general metric space $M$, as the Star Decomposition algorithm (so the garages are
the vertices of $T$, and all edge weights in $T$, which correspond to rounded distances in $M$, are integer powers of two).
In $\mathcal A$ each car $t$ independently drives around $T$,
starting from the garage $i$  where it arrived, looking for a garage that is not full.
The route driven by car $t$ arriving at garage $i$ consists of a concatenation of 
a sequence of depth-first-search walks $W_0(i), W_1(i), W_2(i), \ldots$, where the level $j$ 
walk $W_j(i)$ visits the garages reachable from garage $i$ without
traversing any edges with weight $2^j$ or greater. The algorithm $\mathcal A$ maintains the invariant
that garages always contain the cars that have visited them on the lowest-level walk.

The definition of our algorithm $\mathcal A$ is  similar to the intuitive description of the
Star Decomposition algorithm in \cite{HaradaI25}. 
The two most notable differences are 
\begin{itemize}
    \item The walks in \cite{HaradaI25} are coordinated in the order that visit various portions of the metric space (and this fact is used in the analysis). 
    \item Instead in algorithm $\mathcal A$ 
cars on lower-level walks are given priority, which isn't 
the case in \cite{HaradaI25}, and this alleviates the need for the walks to be coordinated. 
\end{itemize}
We speculate that the formal definition of the Star Decomposition
algorithm in \cite{HaradaI25} is more complicated than is minimally needed  because this
additional complexity in the algorithm description simplifies some aspects of the 
subsequent algorithm analysis. 

The key insight in our analysis of algorithm $\mathcal A$ is to note that that the final parking spots of the cars in algorithm $\mathcal{A}$ essentially matches the optimal bottleneck matching. An optimal bottleneck matching of cars to parking spots in a tree $T$ is a min-cost matching, where the cost of assigning a car at vertex $u$ to a parking spot at vertex $v$ is the weight of the largest edge on the $u-v$ path in $T$. This observation is a relatively straightforward consequence of the properties of
the algorithm. We then extend algorithm $\mathcal A$ for a power-of-two tree metric $T$
to an algorithm $\mathcal B$ for a general metric space $M$ 
as in \cite{HaradaI25}. 
We conclude Section \ref{sec:icp} by noting that a straightforward
consequence of this observation  is that
 algorithm $\mathcal B$ is $(8m-7)$-competitive for the itinerant car problem.

Then in Section \ref{sec:trans} we turn to the online transportation problem.
We first define an algorithm $\mathcal C$, derived from the itinerant car algorithm $\mathcal B$.
In particular, the algorithm $\mathcal C$ maintains the invariant that the residual capacity 
of every garage for $\mathcal C$ is the same as the residual capacity of that garage for algorithm $\mathcal B$.
We then observe that by the triangle inequality that the movement cost for $\mathcal C$ can
be at most the movement cost for $\mathcal B$. 
From this we can then immediately conclude that algorithm $\mathcal C$ is $(8m-7)$-competitive
for the online transportation problem. 

So at a high level, our use of the relaxed itinerant car problem 
served the same ends as the use of the  hybrid technique~\cite{harmonic-alg}
in \cite{HaradaI25}, namely establishing a lower bound to the optimal cost,
however our relaxation allows this to be accomplished much more simply/cleanly. 
We also show in Section \ref{sec:trans}  that our algorithm $\mathcal C$ does not have
the MPFS property, so our use of the relaxed itinerant car problem
also alleviated the algorithmic design need for the MPFS property. 
We think it is at least plausible that the use of such a relaxation may be useful
in resolving other open problems related to online metric  matching. 

\subsection{Other Related Results}

Besides generalizing online metric matching results to online transportation, the two other main
open questions posed in \cite{Kalyanasundaram1998} were to determine the optimal randomized competitive ratio
against an oblivious adversary, and to find optimally-competitive algorithms for particular
metric spaces of interest, most notably the line.

For every metric space, the competitiveness of the RM algorithm is
within a $O(\log^2 m)$ factor of the optimal competitive ratio
achievable by a deterministic online algorithm
for the online metric matching problem~\cite{NayyarR17}.
For the line metric, the RM algorithm is $O(\log m)$-competitive
for the online metric matching problem~\cite{Raghvendra18}, and the competitive ratio of every randomized algorithm against an oblivious
adversary is $\Omega(\log^{1/2} m)$~\cite{PesericoS23}. 
There are several other more specialized results
for online metric matching on a line in the literature~\cite{AntoniadisBNPS19,AntoniadisFT18,harmonic-alg,HaradaI23,elias-OML}.

There is a lower bound of $\Omega(\log m)$ on the randomized
competitive ratio for online metric matching for a general metric 
space. The best known upper bound on the randomized competitive
ratio for online metric matching on a general metric space
is $O(\log^2 m)$~\cite{random-O(log2k),meyerson}. 
This upper bound was extended to online transportation in \cite{KalyanasundaramPS23}, improving on an earlier result that required additive resource augmentation~\cite{ChungPU08}.

The  natural Greedy algorithm was recently shown to be
$O(1)$-competitive for online transportation with a factor-of-three resource augmentation~\cite{ArndtAP23}.

\section{Itinerant Car Problem}\label{sec:icp}

In this section, we present an $(8m-7)$-competitive algorithm for the itinerant car problem. In \Cref{subsec:alg_A}, we give a complete description of algorithm $\mathcal{A}$ for a power-of-two tree metric $T = (V, E)$. In \Cref{subsec:analysis_A}, we analyze algorithm $\mathcal{A}$, and prove it is $(2m-2)$-competitive for the itinerant car problem on power-of-two tree metrics. In \Cref{subsec:alg_B}, we describe algorithm $\mathcal{B}$ for general metric spaces, and prove it is $(8m-7)$-competitive for the itinerant car problem, in a way similar to \cite{HaradaI25}.

\subsection{Algorithm $\mathcal{A}$ for a Power-of-Two Tree Metric}\label{subsec:alg_A}

We start with some definitions. 
Let $T = (V, E)$ be a power-of-two tree metric. 
That is,  one garage is located at each vertex in $V$,
each edge in $E$ has an associated weight that is of the form $2^j$ for some integer $j \in [0, n]$,
and the distance between any two points is the length of the unique path between these points
in the tree $T$. Further, cars will arrive only at vertices of $T$. A garage is \textbf{full} if the number of cars it contains equals its capacity, and \textbf{unfull} otherwise. Define $T_j(i)$ to be the level $j$ tree for garage $i \in [m]$,
that is the  portion of $T$ reachable from garage $i$ without
traversing any edges with weight  $2^j$ or greater. Note $T_0(i) = i$ and $T_{n+1}(i) = T$ for all $i \in [m]$. Define $W_j(i)$ 
as an arbitrary depth-first-search walk/tour on $T_j(i)$, starting from $i$,
which we will call the level $j$ walk for garage $i$.
Alternatively 
$W_j(i)$ can be thought of as an Eulerian tour that traverses each edge in $T_j(i)$ twice,
once in each direction. 
We now give an algorithm for the itinerant car problem on 
a power-of-two tree metric. 

\paragraph{Algorithm $\mathcal A$ Definition:}
Each car $t$ that arrives at garage $i$ first drives along the route of the walk $W_0(i)$ (which is just the vertex at garage $i$), then along the route of the walk $W_1(i)$, then along the route of $W_2(i)$, etc.
If while traversing $W_j(i)$ this car $t$ arrives at a garage $h$, the following occurs:
\begin{itemize}
    \item If garage $h$ is currently unfull, then car $t$ stops at garage $h$ and parks there. 
    \item If the garage $h$ is full, then let car $t'$ be a car currently parked in garage $h$ currently
    on the highest walk level. Say car $t'$ arrived at garage $a \in [m]$, and is currently
    on its level $b$ walk $W_b(a)$ (so no car currently parked in $h$ is on its level $c$ walk, where $c > b$).
    \begin{itemize}
        \item  If $j < b$, then car $t$ parks in garage $h$, ejects car $t'$, and car $t'$ continues traversing $W_b(a)$.
        \item  Otherwise, if $j \geq b$, then car $t'$ stays parked in garage $h$, and car $t$ continues traversing $W_j(i)$.
    \end{itemize}
\end{itemize}
 So intuitively the car that started its walk closer to garage $h$ (using the bottleneck distance metric) ejects the car that started its walk furthest away from garage $h$.

\subsection{Analysis of Algorithm $\mathcal{A}$}\label{subsec:analysis_A}

In this section, we prove \Cref{thm:icp_A}, which states that algorithm $\mathcal{A}$ is $(2m-2)$-competitive for the itinerant car problem on power-of-two tree metrics. The main technical lemma is \Cref{lemma:inc_count}, which establishes that the final walk levels of the cars in algorithm $\mathcal{A}$ essentially matches the optimal bottleneck matching.



We start by introducing some notation. Let $I$ be an arbitrary instance of
the itinerant car problem on the metric space $M$. 
We use ${\mathcal B}(I)$ to denote the cost of algorithm $\mathcal B$ on input $I$ and
${\mathcal A}(J)$ to denote the cost of algorithm $\mathcal A$ on the input $J$ given to 
it by algorithm $\mathcal B$. 
We use $\mbox{Opt}(Z, Y)$ to denote the optimal matching cost for instance $Z$ in metric space $Y$. 
If $Y$ is a tree metric, we 
define the bottleneck cost for parking a car arriving at garage $i$ in garage $j$
as the weight of the 
maximum-weight edge on the path between  $i, j$ in the tree $Y$,
and the bottleneck cost of a matching as the aggregate bottleneck cost of
its car-garage pairings. 
If $Y$ is a tree metric, we use $\mbox{BottleOpt}(Z, Y)$ to denote the optimal bottleneck
matching cost for instance $Z$ in metric space $Y$. 
For a car $t$, let $\ell(t)$ be the highest level of a walk that it reached
using algorithm $\mathcal A$ on instance $J$. 

\begin{observation}\label{obs:inc}
In algorithm $\mathcal{A}$, if a car that arrived at garage $i$ has begun its level $\ell$ walk $W_{\ell}(i)$, then it must be the case that all the garages in $T_{\ell-1}(i)$ are currently full, and all cars parked in $T_{\ell-1}(i)$ are on walks with level at most $\ell-1$.
\end{observation}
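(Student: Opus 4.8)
The plan is to read off, from the fact that some car $t$ with arrival garage $i$ has begun $W_{\ell}(i)$, everything that happened while $t$ traversed the preceding walk $W_{\ell-1}(i)$, and then argue that those facts cannot be undone. By the definition of $\mathcal{A}$, car $t$ starts $W_\ell(i)$ only after it has completely traversed $W_{\ell-1}(i)$, and since $W_{\ell-1}(i)$ is a depth-first-search tour of $T_{\ell-1}(i)$, this means $t$ visited every garage $h \in V(T_{\ell-1}(i))$ during that traversal. (The traversal need not be uninterrupted: $t$ could have parked at some garage by ejecting a higher-level car, later been ejected and resumed $W_{\ell-1}(i)$; but the tour is still eventually completed.) Consider the last moment at which $t$ leaves such a garage $h$ during $W_{\ell-1}(i)$. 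Either $t$ reached $h$ and moved on without staying, in which case the case analysis of $\mathcal{A}$ forces $h$ to be full and, with $j = \ell-1$ and $b$ the highest walk level among cars parked in $h$, forces $\ell-1 \ge b$; or $t$ had been parked at $h$ and was just ejected by some car $u$, in which case $t$ was the highest-level occupant of $h$ at walk level $\ell-1$, so $u$ has level $< \ell-1$, and again after $t$ leaves, $h$ is full with every parked car at level $\le \ell-1$. In both cases, immediately after $t$'s final departure from $h$ within $W_{\ell-1}(i)$, garage $h$ is full and all cars parked in it are on walks of level at most $\ell-1$.

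It remains to show this state of affairs persists at least until $t$ has begun $W_\ell(i)$, and I expect this to be the crux. I would prove the following monotonicity invariant: for the fixed garage $i$ and level $\ell$, once a garage $h \in V(T_{\ell-1}(i))$ is full with every parked car at level $\le \ell-1$, it remains so for the rest of the execution. To see this is preserved by a single move, observe that the occupants of $h$ change only when some car parks at $h$; since $h$ is full, this must happen by that car ejecting the currently highest-level occupant, whose level $b$ satisfies $b \le \ell-1$, and the entering car then has level strictly less than $b \le \ell-1$ — so afterwards $h$ is still full and its highest occupied level is still at most $\ell-1$. A car whose own walk level is $\ge b$ and that visits the full garage $h$ simply continues, by the $j \ge b$ branch, so it cannot disturb the invariant either.

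Combining the two parts yields the observation: by the time $t$ begins $W_\ell(i)$ it has made its final departure from every garage of $T_{\ell-1}(i)$ during $W_{\ell-1}(i)$, so each such garage has at some earlier time reached the state ``full, all occupants at level $\le \ell-1$'', and by the invariant each is still in that state when $t$ begins, and indeed while $t$ runs, $W_\ell(i)$. The only slightly delicate point — and the one that makes the monotonicity invariant necessary rather than a direct ``nothing moved in between'' argument — is that $t$'s traversal of $W_{\ell-1}(i)$ need not be uninterrupted; everything else is a routine appeal to the case analysis defining $\mathcal{A}$.
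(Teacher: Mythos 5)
The paper states this as an ``observation'' without supplying a proof, so there is no paper argument to compare against; your job was to reconstruct one from scratch, and you have done so correctly. Your two-part structure is the right one: first, from the completed traversal of $W_{\ell-1}(i)$, extract that each garage $h$ of $T_{\ell-1}(i)$ was, at the moment of $t$'s final departure from $h$ during that walk, full with all occupants at level at most $\ell-1$; second, establish a monotonicity invariant so that this state persists. You also correctly identified why the invariant is genuinely needed rather than a luxury: $t$'s traversal of $W_{\ell-1}(i)$ can be interrupted by parkings and ejections, during which the occupancy of garages $t$ has already passed can churn, so one cannot simply say ``nothing changed since $t$ left.'' The invariant itself is clean and its preservation under a single move is verified properly --- a parking into a full garage must be an ejection of the current highest-level occupant $b \le \ell-1$ by a strictly lower-level car, so the maximum occupied level cannot rise; a visit by a car at level $\ge b$ does nothing. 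Your case analysis at the moment of final departure (either $t$ passed through a full $h$ at level $j=\ell-1\ge b$, or $t$ was itself the highest-level occupant and was ejected by a strictly lower-level car) covers the relevant cases, including the one where $t$ had originally parked at an unfull $h$, since the ejection mechanism guarantees $h$ is full when $t$ is removed. One tiny point worth making explicit if you polish this: if $t$ ever parks at some garage during $W_{\ell-1}(i)$ and is \emph{never} ejected, then $t$ never reaches $W_\ell(i)$ at all, so the hypothesis of the observation already guarantees that every such parking is followed by an eventual ejection. With that said, the proposal is complete and correct.
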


\begin{lemma}\label{lemma:inc_count}
$\sum_{t=1}^k 2^{\ell(t)-1}  = \mbox{BottleOpt}(J, T)$.
\end{lemma}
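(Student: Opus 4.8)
The plan is to show that both sides of the identity are controlled by the same combinatorial data: the ``overflow'' statistics of the laminar family of subtrees $\{T_j(i)\}$. For a level-$j$ subtree $C$ (i.e.\ a connected component of $T$ after all edges of weight $\ge 2^j$ are deleted) write $n_J(C)$ for the number of cars of $J$ arriving in $C$, set $\mathrm{cap}(C)=\sum_{i\in C}c_i$, and define the \emph{overflow} $\phi(C)=\max\{0,\ n_J(C)-\mathrm{cap}(C)\}$. Since the level-$j$ subtrees partition $V$ and the level-$j$ subtrees refine the level-$(j+1)$ subtrees, these sets form a hierarchy, which I will use repeatedly. Let $\Phi_j=\sum_{C}\phi(C)$, the sum over level-$j$ subtrees; then $\Phi_j$ is non-increasing in $j$ (subadditivity of $x\mapsto\max\{0,x\}$ under merging subtrees), $\Phi_{-1}:=k$ trivially, and $\Phi_j=0$ once $2^j$ exceeds every edge weight.

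First I would pin down the matching side: $\mbox{BottleOpt}(J,T)=\Phi_0+\sum_{j\ge1}2^{j-1}\Phi_j$. The lower bound is a cut/Hall argument: in any feasible matching at least $\phi(C)$ of the cars arriving in a level-$j$ subtree $C$ must be matched outside $C$, and every such car crosses an edge of weight $\ge 2^j$ and so pays bottleneck cost $\ge 2^j$; summing the marginal increments $2^j-2^{j-1}$ over levels yields the bound. The matching upper bound uses laminarity: assigning cars greedily from the smallest subtrees outward produces a single feasible matching that, for \emph{every} level $j$ simultaneously, sends exactly $\phi(C)$ cars out of each level-$j$ subtree $C$, and this matching attains the stated cost.

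The crux is the algorithm side: that $\mathcal A$ realizes exactly these overflows, i.e.\ for each $j\ge1$ the number of cars with $\ell(t)\ge j$ equals $\Phi_{j-1}$. I would prove this by induction on $j$. The ``$\ge$'' direction is essentially \Cref{obs:inc}: a car begins its level-$j$ walk only after its level-$(j-1)$ subtree $C$ has become (and, by the observation, is recorded as) saturated, so the subtrees that ever saturate account for at least $\phi(C)$ cars escaping each. The ``$\le$'' direction --- that $\mathcal A$ is \textbf{not wasteful}, never pushing a car past a subtree $C$ unless $C$ is genuinely over capacity --- is where I expect the real work, because cars arrive online and parking spots inside $C$ can re-open through cascading ejections. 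Here I would exploit the lowest-level-priority ejection rule: a car parked in $C$ whose home is in $C$ can be ejected only by a car on a strictly lower walk level, which by \Cref{obs:inc} (applied inside $C$) must itself have its home inside the relevant sub-subtree of $C$; iterating this down the hierarchy shows that once the $\mathrm{cap}(C)$ slots of $C$ are all occupied by cars whose home lies in $C$, no further home-in-$C$ car is ever evicted all the way out of $C$ and no home-outside-$C$ car ever ends up occupying one of those slots, so exactly $\phi(C)$ home-in-$C$ cars reach level $\ge j$.

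Finally, combining the two characterizations is a short calculation. Writing $N_{\ge j}=\Phi_{j-1}$ from the structural claim, the number of cars with $\ell(t)=\ell$ is $\Phi_{\ell-1}-\Phi_{\ell}$, so
\[
\sum_{t=1}^k 2^{\ell(t)-1}=\sum_{\ell\ge1}2^{\ell-1}\bigl(\Phi_{\ell-1}-\Phi_{\ell}\bigr)=\sum_{i\ge0}2^{i}\Phi_{i}-\sum_{j\ge1}2^{j-1}\Phi_{j}=\Phi_0+\sum_{j\ge1}2^{j-1}\Phi_{j}=\mbox{BottleOpt}(J,T),
\]
where the level-$0$ cars are the ones parking at their arrival garage at bottleneck cost $0$ and contribute nothing to either side. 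I expect the written proof to spend almost all of its length on the non-wastefulness half of the structural claim; everything else is the two bookkeeping computations above.
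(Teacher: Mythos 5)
Your proposal takes essentially the same route as the paper's proof: the paper also works level by level, shows that for each $j$ the number of cars with $\ell(t)\ge j+1$ equals the number of bottleneck-matching edges of weight $\ge 2^j$, and proves both counts equal $\sum_i s_i$ (your $\Phi_j$), using the priority rule for the algorithm side and optimality of greedy for the bottleneck side. Your writeup just fills in more detail than the paper in two places: the paper dispatches the ``non-wasteful'' direction with the single clause ``since algorithm $\mathcal{A}$ gives priority to cars on lower-level walks,'' whereas you sketch the hierarchical/inductive argument behind it; and the paper leaves the final Abel-summation step implicit, whereas you write it out. One small caveat: your claim that the level-$0$ cars ``contribute nothing to either side'' is not literally true for the left-hand side as written, since a car with $\ell(t)=0$ contributes $2^{-1}=\tfrac12$ to $\sum_t 2^{\ell(t)-1}$; the clean identity is really $\sum_{t:\,\ell(t)\ge 1} 2^{\ell(t)-1}=\mathrm{BottleOpt}(J,T)$, i.e.\ the sum should be restricted to cars that actually move. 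The paper's statement has the same imprecision, and it is harmless because Theorem~\ref{thm:icp_A} only uses the lemma as an upper bound on the contribution of cars that incur cost, so your treatment is consistent with the paper's.
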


\begin{proof}
We prove  that the number of cars $t$ with $\ell(t) \ge j+1$ equals the number of cars  $t$ with bottleneck matching cost  $2^j$ or greater in an optimal bottleneck matching.
Let $C_1, \ldots C_h$ be the connected components of $T$ when the edges $B$ of
weight $2^j$ or greater are removed. 
Consider an arbitrary such connected component $C_i$. 
 Let $s_i$ the number of cars that arrive in $C_i$ minus the aggregate capacity of
 the garages in $C_i$ if this is positive, and zero otherwise. 
Then the number of cars $t$ that algorithm $\mathcal A$ moves across  the edges 
in $B$, and thus have $\ell(t) \geq j+1$,  is $\sum_{i=1}^h s_i$, since algorithm $\mathcal A$ gives priority to cars on lower-level walks.
Further the number of edges of weight $2^j$ or greater in an optimal bottleneck matching 
is also  $\sum_{i=1}^h s_i$, as the natural greedy algorithm computes an optimal
bottleneck matching (as can be shown by a simple exchange argument). 
\end{proof}

\begin{theorem}\label{thm:icp_A}
 $\mathcal{A}(J) \leq (2m-2)\cdot \mbox{BottleOpt}(J, T)$.
\end{theorem}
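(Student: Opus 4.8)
The plan is to bound the total distance traveled by each car in algorithm $\mathcal{A}$ in terms of its final walk level $\ell(t)$, and then invoke \Cref{lemma:inc_count} to convert the resulting sum of walk-level contributions into the bottleneck optimum. The key geometric fact is that a car arriving at garage $i$ and reaching walk level $\ell(t)$ traverses at most the walks $W_0(i), W_1(i), \dots, W_{\ell(t)}(i)$. The walk $W_j(i)$ is an Eulerian tour of $T_j(i)$, so its length is exactly twice the total weight of the edges in $T_j(i)$; since every such edge has weight at most $2^{j-1}$ (edges of weight $2^j$ or more are excluded from $T_j(i)$) and $T_j(i)$ has at most $m-1$ edges, the length of $W_j(i)$ is at most $2(m-1)\cdot 2^{j-1} = (m-1)2^j$. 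However, this naive bound summed over $j = 0,\dots,\ell(t)$ gives roughly $(m-1)2^{\ell(t)+1}$, which is off by a constant factor from what we want; the tighter accounting is that the walk $W_{\ell(t)}(i)$ alone dominates, and one should argue the cost of car $t$ is at most $2(m-1) \cdot 2^{\ell(t)-1} \cdot$ (something), i.e., essentially proportional to $2^{\ell(t)}$ times $(m-1)$.

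The main obstacle — and the step requiring the most care — is getting the constant exactly right so that the per-car bound, when summed and combined with \Cref{lemma:inc_count}, yields precisely the factor $2m-2$ rather than something like $4m$. Concretely, I would show that the distance traveled by a car $t$ that arrives at garage $i$ and ends on level $\ell(t)$ is at most $2(m-1) \cdot 2^{\ell(t)-1}$. The idea is that the car only ever \emph{completes} walks strictly below level $\ell(t)$ — by \Cref{obs:inc}, when it begins $W_{\ell(t)}(i)$ the whole tree $T_{\ell(t)-1}(i)$ is full of lower-level cars, so there is "no point" re-traversing it — and more importantly a charging argument should show that the total edge-weight the car accumulates across \emph{all} its walks telescopes: traversing $W_j(i)$ costs at most $2 \cdot (\text{total weight of } T_j(i) \setminus T_{j-1}(i))$ worth of "new" edges plus the recursive cost, and since $T_j(i)$ nests, the sum over $j \le \ell(t)$ of the new-edge weights is at most the total weight of $T_{\ell(t)}(i) \le (m-1)2^{\ell(t)-1}$, each traversed twice. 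So the car's cost is at most $2(m-1)2^{\ell(t)-1}$.

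Given that per-car bound, the theorem follows by summation:
\[
\mathcal{A}(J) \;=\; \sum_{t=1}^{k} (\text{cost of car } t) \;\le\; \sum_{t=1}^{k} 2(m-1)\, 2^{\ell(t)-1} \;=\; (2m-2) \sum_{t=1}^{k} 2^{\ell(t)-1} \;=\; (2m-2)\cdot \mbox{BottleOpt}(J,T),
\]
where the last equality is exactly \Cref{lemma:inc_count}. One subtlety I would double-check: a car can be \emph{ejected} and resume traversing its walk, so "the cost of car $t$" must account for the full length of the (possibly interrupted) walk it is on plus all earlier walks; but since ejection only happens on a walk the car was already traversing, and the car resumes the \emph{same} walk $W_b(a)$ rather than restarting, the total length traversed is still at most $\sum_{j\le \ell(t)} |W_j(i)|$, so the bound is unaffected. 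The other thing to verify is that the edges of $T_{\ell(t)}(i)$ number at most $m-1$ — immediate since $T_{\ell(t)}(i)$ is a subtree of $T$ on at most $m$ vertices.
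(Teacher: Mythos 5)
Your high-level plan --- bound the cost of each car $t$ by $2(m-1)\cdot 2^{\ell(t)-1}$, sum over $t$, and invoke Lemma \ref{lemma:inc_count} --- is exactly the paper's approach, and the per-car target $2(m-1)\cdot 2^{\ell(t)-1}$ is the right one. However, your justification of that per-car bound via a ``telescoping'' argument that counts only ``new'' edge weight at each level is a genuine gap. Algorithm $\mathcal{A}$ truly re-traverses: when car $t$ passes from $W_j(i)$ to $W_{j+1}(i)$ it drives the full Eulerian tour of $T_{j+1}(i)$, re-crossing every edge of $T_j(i)$ twice more. Observation \ref{obs:inc} only tells you the lower-level tree is full; it does not let the walk skip it. Hence the cost of car $t$ is $\sum_{j'=1}^{\ell(t)} 2\,w(T_{j'}(i))$, which is in general much larger than $2\,w(T_{\ell(t)}(i))$ (if every edge of $T_{\ell(t)}(i)$ has weight $2^0$ it is $\ell(t)$ times larger). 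As written, your per-walk bound only yields the $\approx 4(m-1)2^{\ell(t)-1}$ you correctly identified as insufficient.

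The missing idea is to account \emph{per edge} rather than per walk, which is what the paper does. Fix an edge $e$ of weight $2^j$ in $T_{\ell(t)}(i)$. It can lie only in the nested trees $T_{j+1}(i),\dots,T_{\ell(t)}(i)$, so car $t$ crosses $e$ at most $2(\ell(t)-j)$ times, contributing at most $2(\ell(t)-j)\cdot 2^j$ to the cost. Using $2x \leq 2^x$ for integers $x\geq 1$ (with $x=\ell(t)-j$), this contribution is at most $2^{\ell(t)-j}\cdot 2^j = 2^{\ell(t)}$; it is this exponential absorption of the linear re-traversal count, not a telescope, that recovers the constant. Summing over the at most $m-1$ edges of $T_{\ell(t)}(i)$ gives a per-car cost of at most $(m-1)2^{\ell(t)} = 2(m-1)\cdot 2^{\ell(t)-1}$, after which your final summation and invocation of Lemma \ref{lemma:inc_count} go through unchanged. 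Your side observations --- that ejected cars resume rather than restart their walks, and that $T_{\ell(t)}(i)$ has at most $m-1$ edges --- are both correct and both needed.
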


\begin{proof}
For a car $t$ arriving at a garage $i$, 
let $e_j(t)$ be the number of edges with weight $2^j$ in $T_{\ell(t)}(i)$. 
The claim then follows from the following sequence of inequalities:
\begin{align}
{\mathcal A}(J) & \le \sum_{t=1}^k \sum_{j=0}^{\ell(t)-1} 2(\ell(t)-j) \cdot 2^j e_j(t) \label{eq:fubar1} \\
&\leq \sum_{t=1}^k \sum_{j=0}^{\ell(t)-1} 2^{\ell(t)-j} \cdot 2^j e_j(t) \label{eq:fubar2}\\
  &=  \sum_{t=1}^k  \sum_{j=0}^{\ell(t)-1} 2^{\ell(t)} e_j(t) \label{eq:fubar3}\\
  &=\sum_{t=1}^k  2^{\ell(t)} \sum_{j=0}^{\ell(t)-1}  e_j(t) \label{eq:fubar4}\\
   &\leq\sum_{t=1}^k  2^{\ell(t)} (m-1) \label{eq:fubar5}\\
  &= (2m-2) \cdot \sum_{t=1}^k 2^{\ell(t)-1}  \label{eq:fubar6}\\
  &= (2m-2)\cdot \mbox{BottleOpt}(J, T) \label{eq:fubar7}
\end{align}
Line (\ref{eq:fubar1}) follows since car $t$ arriving at garage $i$
will traverse the edges in $T_{j+1}(i)$ at most $2(\ell(t)-j)$ times for $j \leq \ell(t)-1$,
and an edge of weight $2^j$ will only appear in trees 
$T_{j+1}(i), T_{j+2}(i), \dots, T_{\ell(t)}(i)$. 
Line (\ref{eq:fubar5}) follows since tree $T$ has $m-1$ edges. 
Line (\ref{eq:fubar7}) follows from Lemma
\ref{lemma:inc_count}.
\end{proof}

\subsection{Algorithm $\mathcal{B}$ and Analysis for General Metric Spaces}\label{subsec:alg_B}

We now describe an algorithm $\mathcal{B}$ for the itinerant car problem on general metric spaces, and prove in \Cref{thm:icp_B} that algorithm $\mathcal{B}$ is $(8m-7)$-competitive. We assume without loss of generality that no
two garages are in the same location, and scale distances so that the minimum distance between two garages is 1.

\paragraph{Algorithm $\mathcal{B}$ Definition:} Algorithm $\mathcal{B}$ simulates algorithm $\mathcal{A}$. Algorithm $\mathcal{B}$ first computes a minimum spanning tree $S$ of $M_G$, where $M_G$ is the original metric space $M$ restricted to the garage locations. Algorithm $\mathcal{B}$ then computes a tree $T = (V, E)$ from $S$ by rounding distances up to the next integer power of two. Algorithm $\mathcal{B}$ simulates algorithm $\mathcal{A}$ on $T = (V, E)$ in the following way: for each car $t$ arriving at location $x_t \in X$, algorithm $\mathcal{B}$ moves car $t$ to the nearest garage $i$ to $x_t$, simulates algorithm $\mathcal{A}$ on $T$ as if car $t$ arrived at garage $i$, and moves the cars identically to how algorithm $\mathcal{A}$ moves its cars on $T$.

\begin{lemma}
\label{lem:bottle4}
$\mbox{BottleOpt}(J, T)  \le 4 \cdot \mbox{Opt}(I, M)$.   
\end{lemma}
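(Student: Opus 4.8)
The plan is to chain together three inequalities, shedding one factor of $2$ at each step, through the intermediate quantities $\mbox{Opt}(J, M_G)$ and $\mbox{Opt}(I, M)$. Write $i_t$ for the garage nearest to $x_t$, so that in the instance $J$ produced by algorithm $\mathcal{B}$ car $t$ arrives at garage $i_t$.

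The one structural fact I need is the standard bottleneck property of a minimum spanning tree: for any two garages $i$ and $j$, the maximum weight of an edge on the unique $S$-path between them is at most $d(g_i, g_j)$. This follows from the cut-exchange argument — if some edge $e$ on that path had weight exceeding $d(g_i, g_j)$, then deleting $e$ from $S$ disconnects $i$ from $j$, the metric edge $(g_i, g_j)$ crosses the resulting cut and is strictly cheaper, and swapping it in produces a spanning tree of smaller total weight, contradicting the minimality of $S$. Since $T$ has the same edge set as $S$ with each weight rounded up to the next integer power of two — a monotone map that increases each weight by at most a factor of $2$ — the bottleneck cost in $T$ of pairing a car at garage $i$ with garage $j$, namely the maximum rounded weight on that path (equivalently, the rounded value of the maximum weight), is at most $2\, d(g_i, g_j)$.

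Next I would bound $\mbox{Opt}(J, M_G)$ in terms of $\mbox{Opt}(I, M)$: take an optimal assignment $\sigma$ for $I$ in $M$ and reuse it for $J$, which is feasible since it respects the same capacities. For each car $t$ the triangle inequality gives $d(g_{i_t}, g_{\sigma(t)}) \le d(g_{i_t}, x_t) + d(x_t, g_{\sigma(t)})$, and $d(g_{i_t}, x_t) \le d(x_t, g_{\sigma(t)})$ because $i_t$ is a closest garage to $x_t$, so $d(g_{i_t}, g_{\sigma(t)}) \le 2\, d(x_t, g_{\sigma(t)})$; summing over $t$ yields $\mbox{Opt}(J, M_G) \le 2\,\mbox{Opt}(I, M)$. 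To finish, let $\tau$ be an optimal sum-cost assignment for $J$ in the metric $M_G$; by the second paragraph its bottleneck cost in $T$ is at most $\sum_t 2\, d(g_{i_t}, g_{\tau(t)}) = 2\,\mbox{Opt}(J, M_G)$, and $\mbox{BottleOpt}(J, T)$ is at most this bottleneck cost, so $\mbox{BottleOpt}(J, T) \le 2\,\mbox{Opt}(J, M_G) \le 4\,\mbox{Opt}(I, M)$. There is no genuine obstacle here; the only point requiring a moment's care is the recognition that the $T$-bottleneck distance between two garages is controlled by the minimax distance in $M_G$, which is exactly what the minimum spanning tree encodes, after which everything reduces to the triangle inequality and the monotonicity of the rounding.
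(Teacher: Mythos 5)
Your proof is correct and follows essentially the same route as the paper's: both rely on the factor-of-two rounding from $S$ to $T$, the MST cycle/bottleneck property to control the tree-bottleneck distance by the metric distance $d(g_i,g_j)$, and the triangle-inequality-plus-nearest-garage argument to pass from $J$ to $I$. The only difference is cosmetic — the paper passes through the intermediate quantity $\mbox{BottleOpt}(J,S)$ before switching to sum cost, whereas you fold the rounding factor directly into a comparison against $\mbox{Opt}(J, M_G)$; the underlying three facts and three factors of $2$ (one of which is never used, hence $4$ rather than $8$) are identical.
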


\begin{proof}
\begin{align}
\mbox{BottleOpt}(J, T) &\leq 2 \cdot \mbox{BottleOpt}(J, S) \label{eq:bubba1}\\
&\leq  2  \cdot \mbox{Opt}(J, M) \label{eq:bubba2} \\
&\leq 4 \cdot \mbox{Opt}(I, M) \label{eq:bubba3}
\end{align}
Inequality (\ref{eq:bubba1}) follows because $T$ is constructed from 
the minimum spanning tree $S$ by rounding weights up by at most a factor of two. 
Inequality (\ref{eq:bubba2}) follows from the cycle property of minimum spanning trees,
that every edge $(i, j)$ in $M_G - S $  must have weight at least as large as any 
edge on the path between $i$ and $j$ in $S$. 
Inequality (\ref{eq:bubba3}) follows from the fact that the aggregate distance from
the arrival location of each car in $M$ to the nearest garage is a lower bound
to the optimal cost  $\mbox{Opt}(I, M)$, and the triangle inequality.
\end{proof}

\begin{theorem}\label{thm:icp_B}
Algorithm $\mathcal{B}$ is $(8m-7)$-competitive for the itinerant car problem.
\end{theorem}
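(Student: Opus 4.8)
The plan is to bound $\mathcal{B}(I)$ by a sum of two pieces and then chain together the three inequalities already in hand: Theorem~\ref{thm:icp_A}, Lemma~\ref{lem:bottle4}, and the displacement bound from the proof of Lemma~\ref{lem:bottle4}. First I would split the total cost incurred by $\mathcal{B}$ on instance $I$ into (i) the cost of the initial moves, where each arriving car $t$ is driven from $x_t$ to its nearest garage $g_{i(t)}$, and (ii) the cost of all the subsequent moves, which $\mathcal{B}$ performs by mirroring, edge for edge, the moves that $\mathcal{A}$ makes on $T$ when fed the simulated instance $J$. These two pieces are disjoint and account for every unit of movement $\mathcal{B}$ pays for, so $\mathcal{B}(I) = \sum_{t=1}^k d(x_t, g_{i(t)}) + (\text{cost of the mirrored moves in } M)$.

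Next I would argue that the mirrored-move cost in $M$ is at most $\mathcal{A}(J)$. Whenever $\mathcal{A}$ moves a car across a single tree edge $(u,v)$ of $T$, that edge has weight $w_T(u,v) \ge w_S(u,v) = d(u,v)$, since $T$ is obtained from the minimum spanning tree $S$ by rounding each edge weight up (to the next power of two), and $S$'s edge weights are the true metric distances. Hence the corresponding move of $\mathcal{B}$ in $M$, from $u$ to $v$, costs $d(u,v) \le w_T(u,v)$. Summing over all edge traversals in all of $\mathcal{A}$'s walks and ejections shows the mirrored-move cost is at most the total $T$-cost $\mathcal{A}(J)$.

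Then I would invoke the displacement bound already established inside the proof of Lemma~\ref{lem:bottle4}: $\sum_{t=1}^k d(x_t, g_{i(t)}) \le \mbox{Opt}(I, M)$, because each term is a lower bound on the contribution of car $t$ to any feasible matching in $M$. Combining, $\mathcal{B}(I) \le \mathcal{A}(J) + \mbox{Opt}(I,M)$. Now apply Theorem~\ref{thm:icp_A} and then Lemma~\ref{lem:bottle4}:
\begin{align*}
\mathcal{B}(I) &\le \mathcal{A}(J) + \mbox{Opt}(I,M) \\
&\le (2m-2)\cdot \mbox{BottleOpt}(J,T) + \mbox{Opt}(I,M) \\
&\le (2m-2)\cdot 4\,\mbox{Opt}(I,M) + \mbox{Opt}(I,M) \\
&= (8m-7)\cdot \mbox{Opt}(I,M).
\end{align*}
Finally, since every feasible online-transportation solution is a feasible itinerant-car solution and the optimum is unchanged, $\mbox{Opt}(I,M)$ is the itinerant-car optimum, giving $(8m-7)$-competitiveness.

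I do not expect a genuine obstacle here; the statement is essentially a bookkeeping composition of the earlier results. The one place that needs a touch of care is the second step: one must check that $\mathcal{B}$'s moves really are edge-by-edge copies of $\mathcal{A}$'s moves so that no move in $M$ is charged more than the weight of the $T$-edge it traverses (and that the rounding inequality $d(u,v) \le w_T(u,v)$ holds for every tree edge). The additive $+\,\mbox{Opt}(I,M)$ from the initial displacement is what turns $8m-8$ into $8m-7$, so it should be kept explicit rather than absorbed.
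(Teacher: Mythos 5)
Your proposal is correct and matches the paper's proof essentially step for step: both split $\mathcal{B}(I)$ into the initial displacement to nearest garages (bounded by $\mbox{Opt}(I,M)$) plus the mirrored moves (bounded by $\mathcal{A}(J)$ because $T$-distances dominate $M$-distances), and then chain Theorem~\ref{thm:icp_A} with Lemma~\ref{lem:bottle4} to get $(8m-7)$. The only cosmetic difference is that you spell out the edge-by-edge domination argument for the mirrored moves, whereas the paper states it at the level of pairwise distances; the content is identical.
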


\begin{proof}
The statement then follows from the following sequence of inequalities:
\begin{align}
{\mathcal B}(I) &\leq {\mathcal A}(J) +  \mbox{Opt}(I, M) \label{eq:dubba1}\\
&\leq (2m-2) \cdot \mbox{BottleOpt}(J, T)   +  \mbox{Opt}(I, M) \label{eq:dubba2} \\
&\leq 4 (2m-2)  \cdot \mbox{Opt}(I, M)  + \mbox{Opt}(I, M)   \label{eq:dubba3}\\
&= (8m-7) \cdot \mbox{Opt}(I, M) 
\end{align}
Inequality (\ref{eq:dubba1}) follows from the fact that the aggregate distance from
the arrival location of each car in $M$ to the nearest garage is a lower bound
to the optimal cost  $\mbox{Opt}(I, M)$, and the fact that the distance 
between two locations in $T$ is an upper bound for the distance between these locations in $M$.
Inequality (\ref{eq:dubba2}) follows from Theorem \ref{thm:icp_A}.
Inequality (\ref{eq:dubba3}) follows from Lemma \ref{lem:bottle4}.
\end{proof}

\section{Online Transportation Problem}
\label{sec:trans}

Here we give an $(8m-7)$-competitive algorithm $\mathcal C$ for the online transportation problem that is based on our algorithm $\mathcal B$ for the itinerant car problem. 

\paragraph{Algorithm $\mathcal C$ Description:} 
Algorithm $\mathcal C$ simulates the itinerant cars algorithm 
$\mathcal B$. Algorithm $\mathcal C$ assigns each car $t$ 
to the unique garage $j$ with the property that algorithm 
$\mathcal B$ has one more  car parked in garage $j$
after car $t$ arrives than before car $t$ arrived.

\begin{theorem}\label{thm:otp_main}
Algorithm $\mathcal C$ is well-defined, and 
is $(8m-7)$-competitive for the online transportation problem.
\end{theorem}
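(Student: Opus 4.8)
The plan is to prove the two claims in \Cref{thm:otp_main} separately: first that algorithm $\mathcal C$ is well-defined, and then that its cost is bounded by that of algorithm $\mathcal B$ (which by \Cref{thm:icp_B} gives the competitive ratio).

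\textbf{Well-definedness.} First I would argue that after car $t$ arrives in algorithm $\mathcal B$, exactly one garage has its occupancy increase by one, and no garage decreases. The point is that algorithm $\mathcal B$ (via $\mathcal A$) processes the arrival of car $t$ as a chain of ejections: car $t$ enters some garage, possibly ejecting a car, which enters another garage, possibly ejecting another car, and so on, until some car lands in an unfull garage and the chain terminates. (The chain terminates because each ejected car continues its depth-first walk and eventually either finds an unfull garage or the instance guarantees one exists, since $k = \sum_i c_i$ and a new car has just been added to a system that was feasible.) Along this chain, every garage except the first (where $t$ entered) and the last (where the chain terminated) both loses and gains exactly one car, so its occupancy is unchanged; the first garage's occupancy is unchanged unless the chain has length zero, and the last garage's occupancy increases by exactly one. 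Hence there is a \emph{unique} garage $j$ whose occupancy after car $t$ arrives exceeds its occupancy before, so $\mathcal C$'s assignment is well-defined. I would also note this garage $j$ is unfull for $\mathcal C$ just before car $t$'s assignment, since $\mathcal C$ maintains the invariant that the occupancy vector of $\mathcal C$ equals that of $\mathcal B$ at all times (which is immediate from the definition of $\mathcal C$), so $\mathcal C$'s assignment respects capacities.

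\textbf{Cost bound.} The key step is to show ${\mathcal C}(I) \le {\mathcal B}(I)$, after which $\mathcal C$ is $(8m-7)$-competitive by \Cref{thm:icp_B}. I would set this up as a per-car inequality: for each car $t$, the cost $d(x_t, g_j)$ that $\mathcal C$ pays to assign $t$ to its garage $j$ is at most the total cost algorithm $\mathcal B$ incurs when processing the arrival of car $t$. Fix car $t$. In $\mathcal B$, processing $t$ consists of moving $t$ from $x_t$ to its nearest garage $g_{i_0}$, and then a chain of moves $g_{i_0} \to g_{i_1} \to \cdots \to g_{i_r} = g_j$ (each $g_{i_{s}} \to g_{i_{s+1}}$ being the movement of the car ejected from $g_{i_s}$, or the movement of $t$ itself on the first hop), where $g_j$ is the garage whose occupancy increased. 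The total $\mathcal B$-cost for car $t$ is $d(x_t, g_{i_0}) + \sum_{s=0}^{r-1} d(g_{i_s}, g_{i_{s+1}})$. By the triangle inequality, this telescoping-style sum is at least $d(x_t, g_{i_r}) = d(x_t, g_j)$, which is exactly $\mathcal C$'s cost for car $t$. Summing over all $t$ gives ${\mathcal C}(I) \le {\mathcal B}(I) \le (8m-7)\cdot \mbox{Opt}(I, M)$.

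\textbf{Main obstacle.} The routine parts are the triangle-inequality telescoping and the invariant that $\mathcal C$ and $\mathcal B$ share an occupancy vector. The subtle point I would be most careful about is the structural claim that $\mathcal B$'s handling of a single arrival really is a \emph{simple chain} — that the sequence of ejections forms a path rather than something more complicated, and in particular that it terminates and that exactly one net occupancy change results. This requires unwinding the ejection rule of algorithm $\mathcal A$: when car $t$ (or an ejected car) reaches a full garage $h$ and the incumbent has a strictly higher walk level, it is ejected and resumes \emph{its own} walk; since at most one car is "in motion" at any moment during the processing of arrival $t$, the moves are genuinely sequential and trace out a single walk through garages, so the only occupancy changes are at the two endpoints of this walk. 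One should also confirm termination: each ejection strictly decreases the multiset of walk levels of parked cars in an appropriate sense (an ejected car had the maximum level in its garage and moves on), and the finiteness of the tree together with feasibility ($k$ cars, $k$ total capacity, and the just-arrived car makes the count exceed capacity somewhere only transiently) ensures the process halts at an unfull garage. Once that structural picture is nailed down, both claims of the theorem follow immediately.
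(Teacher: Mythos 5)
Your proposal is correct and follows essentially the same route as the paper: both describe $\mathcal B$'s response to a single arrival as a chain of ejections, observe that exactly one garage gains a car (giving well-definedness), and telescope the chain via the triangle inequality to get $\mathcal C(I) \le \mathcal B(I)$. The one sentence you leave implicit that the paper states explicitly is that the optimal cost for the online transportation instance equals the optimal cost for the itinerant car instance, which is what licenses plugging $\mathcal C(I) \le \mathcal B(I)$ into the $(8m-7)$ guarantee of \Cref{thm:icp_B}; you also spend more words than the paper on termination, which the paper simply takes as implicit in the description of the chain $W_1,\ldots,W_h$.
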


\begin{proof}
The response of algorithm $\mathcal B$ to the arrival of a car $t$ at a location $x_t$ can be described as a sequence of car movements/walks
$W_1, \ldots W_h$, where 
\begin{itemize}
    \item Walk $W_1$ is the movement of car $\beta(1) = t$
to some garage $\alpha(1)$. 
\item The start of walk $W_i$ is garage $\alpha(i-1)$, $2 \le i \le h$. 
\item The car $\beta(i)$ moving on walk $W_i$ is the car ejected from garage
$\alpha(i-1)$ when car $\beta(i-1)$ terminated its walk there, $2 \le i \le h$. 
\item The garage $\alpha(h)$ was unfull right before car $t$ arrives. 
\end{itemize}
The fact $\mathcal C$ is well-defined follows immediately from 
this understanding of algorithm $\mathcal B$. 
The movement cost of algorithm $\mathcal C$ is at most the movement
cost of algorithm $\mathcal B$ by the triangle inequality.
The competitive ratio of algorithm $\mathcal C$ for the online
transportation problem is at most the competitive
ratio of algorithm $\mathcal B$ for the itinerant car problem
because an optimal solution for the online transportation problem
is also optimal for the itinerant car problem. The claim follows from \Cref{thm:icp_B}.
\end{proof}

\begin{wrapfigure}{r}{0.6\textwidth}
\centering
\includegraphics[width=0.58\textwidth]{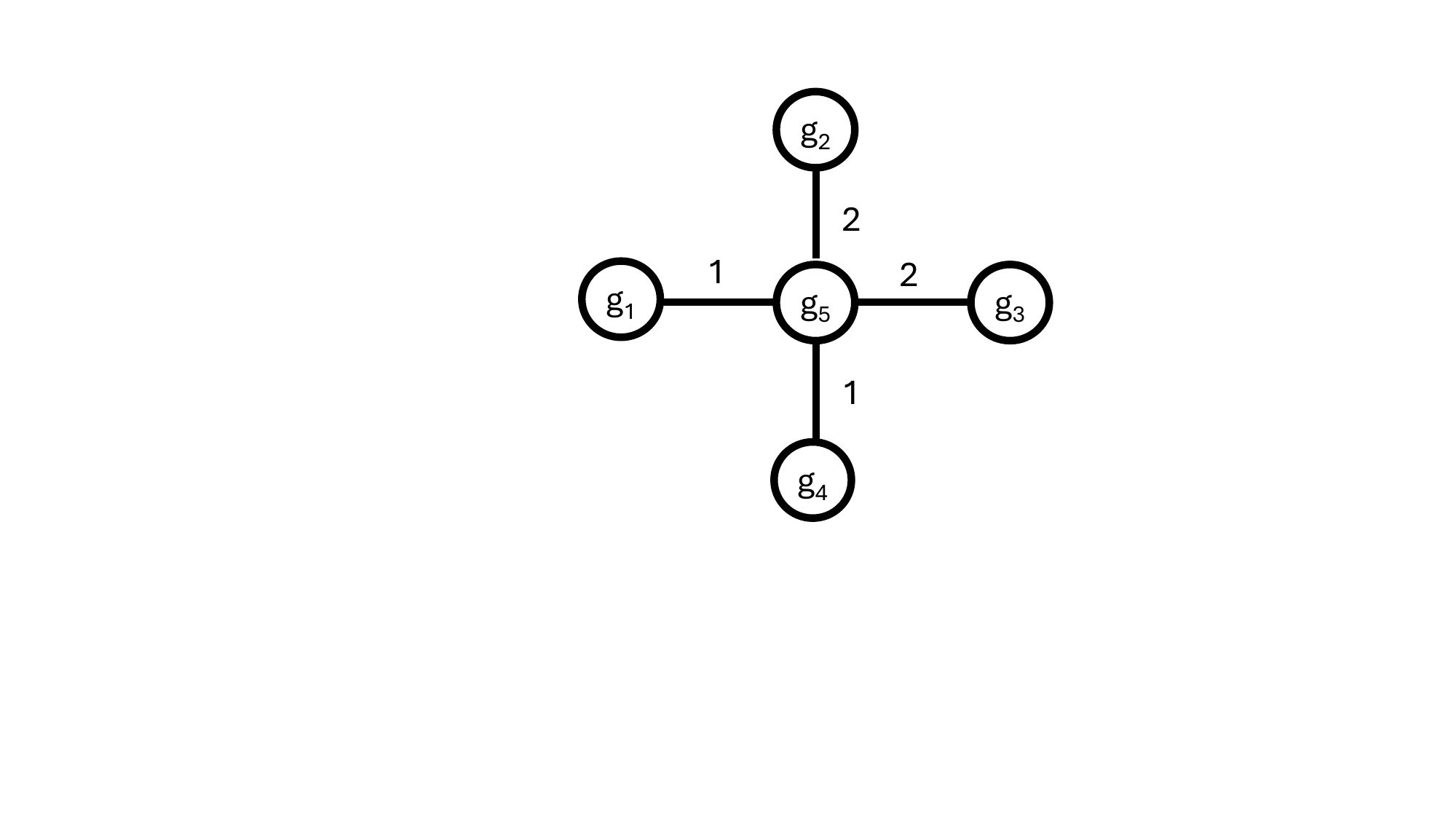}
\caption{Example tree metric $T$ showing that algorithm $\mathcal C$ does not have the MPFS property}
\label{fig:notMPFS}
\end{wrapfigure}
Lastly, we briefly explain why algorithm $\mathcal C$ does not have
the MPFS property. Consider the metric space in Figure \ref{fig:notMPFS},
that is already a power-of-two tree metric $T$, where all garages have
unit capacity. 
Consider a car $t$ that arrives at the garage at location $g_1$,
and that must search for another garage because this garage is filled
by an earlier car that arrived there. Then car $t$ would next
move to the garage at $g_5$. If the garage at $g_5$ was
filled with a car arriving earlier at $g_5$, then car $t$ would
have to move to the garage at $g_4$. However, if
the garage at $g_5$ was
filled with a car arriving earlier at $g_3$, then car $t$ could
next move to the garage at $g_2$ (because in algorithm $\mathcal A$, the
walk followed by cars arriving at $g_3$ could visit $g_2$ before $g_4$). 
Thus the location that car $t$ parks is not solely determined by the
arrival location of the car and the location of the unfull garages.

\clearpage

\bibliography{main}
\bibliographystyle{plain}

\end{document}